\documentclass[11pt]{article}
\usepackage{fullpage}
\usepackage[utf8]{inputenc}
\usepackage{amsmath, amsthm, amssymb, mathtools,xcolor}

\usepackage{hyperref}
\hypersetup{
    colorlinks=true,
    allcolors=blue
}

\newcommand{\ket}[1]{|#1\rangle}
\newcommand{\bra}[1]{\langle#1|}
\newcommand{\ketbra}[2]{|#1\rangle\langle#2|}

\newcommand{\norm}[1]{\left\lVert#1\right\rVert}

\newtheorem{theorem}{Theorem}[section]

\newtheorem{lemma}[theorem]{Lemma}

\def\01{\{0,1\}}

\newcommand{\eps}{\varepsilon}
\renewcommand{\epsilon}{\varepsilon}

\newcommand{\mathify}[1]{\ifmmode{#1}\else\mbox{$#1$}\fi}

\def\ri{{\rm i}}

\title{Optimal Lower Bound for\\ Ground-State Energy Estimation 
with a Guiding State}
\author{Rolando D. Somma\thanks{Google Quantum AI, Venice, CA 90291, United States.}
\and 
Ronald de Wolf\thanks{Google Quantum AI, Venice, CA 90291, United States. Also QuSoft, CWI and University of Amsterdam, the Netherlands. Partially supported by the Dutch Research Council (NWO) through Gravitation-grant Quantum Software Consortium, 024.003.037.}
}
\date{}

\begin{document}

\maketitle

\begin{abstract}
The guided Hamiltonian problem is the following: given access to the unitary $U=e^{\ri H}$ for some Hamiltonian $H$, and given access to a unitary that prepares a guiding state promised to have  overlap at least $\gamma >0$ with the ground space of $H$, estimate the ground-state energy of $H$ within additive error $\delta > 0$ and success probability at least $1-\eps $, $\eps>0$.
How many applications of $U$ and its inverse $U^{-1}$ are necessary and sufficient?
An upper bound $O(\log(1/\eps)\log(1/\gamma)/\gamma\delta)$ was known, and was improved to $O(\log(1/\eps)/\gamma\delta)$ very recently~\cite{JW:optQPE}. A matching lower bound was known whenever one of the three parameters $\delta,\gamma,\eps$ was held constant~\cite{Mande2026tightboundsquantum}.
In this paper we prove the joint lower bound $\Omega(\log(1/\eps)/\gamma\delta)$ with the tight $\eps$-dependence provided the dimension of $H$ is at least $\log(1/\eps)/\gamma^2$. Furthermore, 
we show that this same lower bound (with slightly larger dimension) holds for both the special case in which the ground state is guaranteed to be unique and $H$ has a gap of $\delta$ between its first and second eigenvalue; and
for ground-state preparation, where $\delta$ denotes the spectral gap and $\epsilon$ now is the approximation error. The lower bounds also apply when the Hamiltonian can be accessed via its block-encoding, and when fractional powers of $U$ are allowed, as in continuous-time Hamiltonian simulation. Lastly, improved upper bounds are known when $H$ is nonnegative
and presented as a sum of squares; and our results imply the lower bound $\Omega(\log(1/\eps)/\gamma\sqrt{\delta})$ for this case.
\end{abstract}



\section{Introduction}

\subsection{Ground-state energy estimation with guiding states}

One of the core problems in quantum chemistry is the following: given a classical description of some Hamiltonian $H$ (e.g., an electronic structure Hamiltonian), estimate its \emph{ground-state energy}, which is its smallest eigenvalue $\lambda_{\min}$. If $H$ is normalized such that its eigenvalues are all in $[0,2\pi-2\delta)$ (the ``$-2\delta$'' is to avoid the issue that 0 and $2\pi$ are the same around the circle) and we define the unitary $U=e^{\ri H}$ (which has the same eigenvectors as $H$, with eigenvalue $\lambda$ of $H$ becoming eigenvalue $e^{\ri \lambda}$ for $U$), then estimating the ground-state energy of $H$ is equivalent to estimating the smallest eigen\emph{phase} of $U$. 

If we are additionally given a \emph{ground state} (i.e., an eigenstate of~$H$ corresponding to $\lambda_{\min}$), then the well-known phase estimation algorithm~\cite{Kit95} is tailor-made to estimate the ground-state energy: it can approximate $\lambda_{\min}$ up to $\pm\delta$ using $O(1/\delta)$ applications (or ``queries'') of $U$, with small constant error probability. However, it is typically hard to prepare the ground state of $H$, or even something close to it. What \emph{can} sometimes be done in situations where we (or our chemistry friends) have some intuition about roughly what the ground state should look like, is a relatively cheap preparation of some quantum ``guiding state'' (also sometimes known as ``Ansatz'') that has some non-negligible overlap with the ground state.
In the complexity-theoretic context, this problem of ground-state energy estimation for a local Hamiltonian given a guiding state is known as the 
``guided local Hamiltonian problem''~\cite{GL23,CFG+23,WFC24} and has connections with deep complexity questions such as the PCP conjecture.

In this paper we focus first on the minimal number of times we need to apply~$U$ and its inverse $U^{-1}$ to estimate $\lambda_{\min}$ (or $\lambda_{\max}$, which can be seen to be an equally hard problem basically by simply replacing $H$ by $2\pi I-H$), in terms of the allowed approximation error~$\delta$, the overlap parameter~$\gamma$, and the allowed error probability~$\eps$:
\begin{quote}
Let $H$ be a Hamiltonian, with eigenvalues normalized to be in $[0,2\pi-2\delta)$, whose lowest eigenvalue we want to approximate. 
We can apply the unitary $U=e^{\ri H}$, its inverse $U^{-1}$, as well as a unitary~$A$ (and $A^{-1}$) that can generate a guiding state $A\ket{0}$ that is promised to have  overlap $\geq\gamma$ 
with the ground state of $H$
(or ground space, if it is degenerate).
How many applications of $U$ and/or $U^{-1}$ are necessary and sufficient to estimate the ground-state energy of $H$ within $\pm\delta$ with success probability $\geq 1-\eps$?
\end{quote}
Ground-state energy estimation is a core problem in quantum chemistry and in quantum computing, and the number of applications of $U$ tends to be the dominant cost in algorithms for this problem, so determining the optimal bounds for this is an important question.
Note that we are only counting the number of applications of $U$ and $U^{-1}$ here, disregarding the gate-complexity and the number of applications of $A$ and $A^{-1}$.
We are also disregarding the number of auxiliary qubits. These ``disregardings'' only strengthen our lower bounds on the number of applications of $U$ and $U^{-1}$, since those lower bounds remain valid even if one allows an unlimited number of applications of $A$ and $A^{-1}$ (allowing for instance for arbitrarily precise tomography of the guiding state $A\ket{0}$), and unlimited number of gates and auxiliary qubits.

One can think of $U=e^{\ri H}$ (or $U^{-1}=e^{-\ri H}$) as performing Hamiltonian simulation with respect to $-H$ (or $H$) for one unit of time. We could also allow unitaries of the form $e^{\ri H\tau}$ for non-integer~$\tau$, allow an algorithm to invoke $e^{\ri H\tau_1},\ldots,e^{\ri H\tau_s}$ for times $\tau_1,\ldots,\tau_s$ of its choice, and count the total cost of the algorithm as $\sum_{i=1}^s|\tau_i|$~\cite{Cleve2009Efficient}.
Our results easily carry over to this setting.\footnote{For example, in the hard families for the lower bounds one can redefine $H'=H/1000$ and $U'=e^{\ri H'}$, then we allow 1/1000-th fractional queries to the original $U$.}

An upper bound $O(\log(1/\eps)\log(1/\gamma)/\gamma\delta)$ on the number of applications of $U$ and $U^{-1}$ has been known for some years~\cite{linlin&tong:groundstateprep,Mande2026tightboundsquantum}.
In particular, \cite[Lemma 4.5]{Mande2026tightboundsquantum} obtains an $O(\log(1/\gamma)/\gamma\delta)$-algorithm that has small constant success probability by, roughly speaking,  putting a quantum minimum-finding algorithm on top of phase estimation with precision~$\delta$ that starts from the guiding state.
The $\log(1/\gamma)$-factor is there to reduce the error probability of the ``inner'' algorithm (phase estimation)  to $\ll\gamma^2$ in order to make the composition work.
The error probability can then be reduced to small~$\eps$ by running this algorithm $O(\log(1/\eps))$ times and essentially taking the median of the outputs (being careful about the ``circularity'' of $[0,2\pi)$)), giving an $O(\log(1/\eps)\log(1/\gamma)/
\gamma\delta)$ upper bound for our problem.

Very recently, simultaneously to and independently of this paper, Jeffery and Witteveen~\cite{JW:optQPE} used the method of ``transducers'' to implement the composition in a better way that avoids the extra factor $\log(1/\gamma)$, giving upper bound $O(1/\gamma\delta)$ for small constant error probability.
Again, the error probability can be reduced to $\eps$ quite straightforwardly by running the algorithm $O(\log(1/\eps))$ times, giving a tight $O(\log(1/\eps)/
\gamma\delta)$ upper bound.%
\footnote{For the two specific hard families of instances in our paper, it is fairly easy to see that the $O(\log(1/\eps)/\gamma \delta)$ upper bound holds: because we are dealing with only two possible known eigenphases, we can do exact phase estimation to distinguish those two values, and then use amplitude amplification on top of that to search for a nonzero eigenphase with error probability $\leq\eps$. Proving the upper bound in general is quite involved, however.}

Matching \emph{lower} bounds are known whenever one of the three parameters $\delta,\gamma,\eps$ is $\Omega(1)$:
\begin{itemize}
    \item $\Omega(1/\gamma\delta)$ for constant~$\eps$ is Lemma~3.6 of~\cite{Mande2026tightboundsquantum}
    \item $\Omega(\log(1/\eps)/\delta)$ for constant~$\gamma$ is Theorem~1.3 of~\cite{Mande2026tightboundsquantum}
    \item    $\Omega(\log(1/\eps)/\gamma)$ for constant~$\delta$ follows from Theorem~4 of~\cite{BCWZ99} with $t=\gamma^2 N$, and $N$-dimensional unitary $U=\mbox{diag}(\{(-1)^{x_j}\}_{j\in[N]})$, 
    which corresponds to Hamiltonian $H=\mbox{diag}(\{\pi x_j\}_{j\in[N]})$, where $[N]=\{1,\ldots,N\}$. This $U$ is a ``phase-query'' to the $N$-bit string $x=x_1 \ldots x_N$, and determining whether $\lambda_{\max}$ is 0 or $\pi$ computes the OR function on~$x$.
\end{itemize}
The linear dependence in $\log(1/\eps)$ of the second and third bullets is disappointing, because for instance for quantum search one can reduce the constant error probability of Grover's $O(\sqrt{N})$-query algorithm to subconstant~$\eps$ at the expense of only a factor~$\sqrt{\log(1/\eps)}$~\cite{BCWZ99}.
Determining the optimal dependence on subconstant~$\eps$ is important, for instance when using ground-state estimation as a subroutine within a larger algorithm.
For subroutines that have one output with large probability one can use transducer-techniques~\cite{belovs2024taming} to avoid having to pay for reducing the error probability of the subroutine, but for subroutines with multiple outputs---such as ground-state estimation---no such efficient general method is known (though one can think of \cite{JW:optQPE} as doing something similar in this special case).

Proving a lower bound that is optimal in all three parameters simultaneously was left open in~\cite{Mande2026tightboundsquantum}, and is one of the main results of this paper: 
$\Omega(\log(1/\eps)/\gamma\delta)$ applications of $U$ and $U^{-1}$ are necessary 
(and, thanks to~\cite{JW:optQPE}, also sufficient).
The result in the third bullet is very suggestive of an approach: replace the $\pi x_j$ in $H$ by $3\delta x_j$, then approximating $\lambda_{\max}$ up to $\pm\delta$ still computes the OR on $x$. Since computing the bit $x_j$ from the modified unitary $U=e^{\ri H}$ now takes $\Theta(1/\delta)$ queries instead of~one, it is very plausible that this modification multiplies the lower bound of the third bullet by $1/\delta$, giving the conjectured $\Omega(\log(1/\eps)/\gamma\delta)$ bound. This is not a proof, however, and using for instance composition theorems for the negative-weights adversary lower bound~\cite{hls:madv,reichardt:tight,lmrss:stateconv,belovs:variations,BelovsLee2020Composition} 
  does not work well when we want to prove bounds for subconstant error probability~$\eps$.

\subsection{Our results}

\paragraph{First lower bound.}
In Section~\ref{sec:lowerboundnonunique} we prove (with help from ChatGPT, see our AI statement at the end of the paper) the lower bound $\Omega(\log(1/\eps)/\gamma\delta)$ if the dimension is $>\log(1/\eps)/\gamma^2$.
The proof uses the polynomial method to analyze the hard family of instances suggested in the previous paragraph. 
It shows, roughly speaking, that solving the $\delta$-approximate case with $T$ queries gives a solution to the $O(1)$-approximate case with $T\delta$ queries. The latter can be lower bounded using the analytic tools (Coppersmith-Rivlin~\cite{coppersmith&rivlin:poly} plus the extremal properties of Chebyshev polynomials) that were used for the $N$-bit OR function in~\cite{BCWZ99}.

Conceptually one may think of our lower bound as adding a factor $1/\delta$ to the known bound $\Omega(\log(1/\eps)/\gamma)$ mentioned in the third bullet above. This approach actually works in general when one has a polynomial-based lower bound for regular phase queries, and wants to derive a lower bound for fractional queries.

\paragraph{Lower bound for the special case where the ground state is unique.}
Our first lower bound answers the open question from~\cite{Mande2026tightboundsquantum} but is not fully satisfactory: the most interesting situation is where the ground state of~$H$ is unique, with some gap between the first two eigenvalues, and the guiding state has overlap at least $\gamma$ with that unique eigenstate.
In contrast, in our lower bound, the ground space has dimension $\geq\gamma^2 N$, and the guiding state $A\ket{0}=\frac{1}{\sqrt{N}}\sum_{j\in[N]}\ket{j}$ has overlap $\gamma$ spread uniformly throughout that space.

In Section~\ref{sec:lowerboundunique} we modify the proof to get the same lower bound for a hard family of instances with a unique ground state, and a gap of $3\delta$ between the first two eigenvalues of~$H$.
The downside is that we have to make the dimension larger by a $\log(1/\eps)$-factor now.

\paragraph{Hardness of finding an approximate ground state of $H$.}
Instead of just approximating the ground-state \emph{energy} to within $\delta$, one might also be interested in preparing a (mixed or pure) ground \emph{state} that is close to the ground state in trace distance. This is assuming $\gamma$ is much less than~1, of course, for otherwise the guiding state is already close to the ground state. In our first hard family (Section~\ref{sec:lowerboundnonunique}), the ground states are basis states $\ket{j}$, for the $\gamma^2 N$ indices $j$ such that $x_j=1$, while in our second hard family (Section~\ref{sec:lowerboundunique}) the unique ground state is $\sqrt{a}\ket{0}+\omega\sqrt{1-a}\ket{j}$ for $a\geq\gamma^2$, which (for small $a$) is essentially the unique $\ket{j}$.

Suppose we can prepare the unique ground state of $H$ 
within trace-distance error $\epsilon$,
where the spectral gap (i.e., the difference between $\lambda_{\min}$ and the next eigenvalue of $H$) is at least $\delta$. As before, 
we would like to achieve this state preparation assuming access to $U$ and to the unitary~$A$ that prepares the guiding state.
Our second result essentially proves that the same lower bound $\Omega(\log(1/\epsilon)/(\gamma \delta))$ applies to the number of applications of $U$ and $U^{-1}$ in this case, provided $\gamma$ is small (of course, when $\gamma=1$ the guiding state \emph{is} the ground state). This follows by a simple reduction from the unique ground-state energy estimation problem, as ground states can be used to estimate $\lambda_{\min}$ via, for example, quantum phase estimation.

\paragraph{Starting from a block-encoding of $H$.}
While our results are formulated
for the query complexity measured by the number of uses of $U$ and $U^{-1}$, we can also consider lower bounds in the related block-encoding model. In this case, the Hamiltonian is accessed via a unitary $V$ acting on a space of larger dimension $M \ge N$, such that its upper-left block contains the Hamiltonian: $\bra 0 V \ket 0 = H$. Here the eigenvalues of~$H$ are all in $[-1,1]$.
The block-encoding model has been proven useful for a variety of tasks, including phase estimation~\cite{babbush2018encoding} and ground-state preparation~\cite{linlin&tong:groundstateprep}, and combined with quantum singular-value transformation unifies many quantum algorithms~\cite{gilyenea:svtrans,Martyn2021GrandUnification}. 

Our lower bounds on the number of uses of $U$ (and $U^{-1}$) also extend to the number of uses of $V$ (and $V^{-1}$) in this block-encoding model
via another reduction, where the Hamiltonian now is $H=-\ri (U -U^{-1})/2$. The block-encoding unitary is simply $V =-\ri (\ketbra ++ \otimes U - \ketbra --\otimes U^{-1})$, 
where $\ket +=( \ket 0 + \ket 1)/\sqrt 2$ and $\ket -=(\ket 0 - \ket 1)/\sqrt 2$. This is like a controlled-$U$ operation conjugated by a single-qubit rotation, and satisfies the desired property $\bra 0 V \ket 0 = H$. If the eigenvalues of $U$ are $e^{\ri \lambda}$, then the eigenvalues of $H$ are $\lambda'=\sin(\lambda)$. An estimate of $\lambda'$
within error $\delta'$ provides an estimate of $\lambda$ within error $\delta \approx \delta'/\sqrt{1-(\lambda')^2}$;
assuming, for example, $\lambda \in [0,\pi/2)$ so that $|\lambda'|<1$, then estimating $\lambda'$ within error $O(\delta)$ gives $\lambda$ within error $\delta$, and the lower bound carries over to the estimation of $\lambda'_{\min}$. That is, since each block-encoding unitary $V$ uses $U$ and $U^{-1}$ once, the lower bound also applies to the number of calls to the block-encoding~$V$ and $V^{-1}$. A similar argument can be used to prove the lower bound for the unique ground-state preparation of~$H$ from the lower bound for estimating the smallest eigenphase of~$U$.

\paragraph{Lower bound for ground-state energy estimation via Spectral Amplification.}

Improved quantum algorithms for ground-state energy estimation are known using 
sum-of-squares spectral  amplification (SOSSA)~\cite{king2026quantum}. Given a classical description of $H$, a preprocessing step (which may involve shifting so that $H \succeq 0$)
allows to express the Hamiltonian as a sum of squares. This construction essentially provides a ``square root'' operator $G$ that satisfies $H=G^\dagger G$.  
While the details of this preprocessing are not too important here, the key observation is that SOSSA subsequently treats $G$ as a black box and uses it to perform faster phase estimation. 
By leveraging the access to the square-root $G$, the eigenvalues of $H$ near zero can be ``amplified'', ultimately reducing the precision requirements for phase estimation and lowering the query complexity of ground-state energy estimation.
These results raise a fundamental question: what is the minimum number of queries to the black box (or block-encoding) $G$  needed to estimate the ground-state energy of a Hamiltonian $H=G^\dagger G$, accounting for the approximation error, overlap, and error probability?

We can extend our results to this setting and answer this question as follows.  Consider the  instances where the Hamiltonian $H=G^\dagger G$ is such that
$G=\frac 1 {1+\sin (3\delta)}(\sin(3\delta)I - (U-U^{-1})/2\ri)$, where $U$ comes from one of our hard families and $\delta >0$. 
(Note that we do not have $U=e^{\ri H}$ here.)
The division by $1+\sin (3\delta)$ ensures $\norm{G}\leq 1$, which is necessary for a block-encoding.
The block-encoding of $G$ can be easily obtained with one use of $U$ and $U^{-1}$ like before, again assuming black-box access to these unitaries. 
Our lower bound 
$\Omega(\log(1/\epsilon)/\gamma \delta)$ on queries to the unitaries $U$ and $U^{-1}$ comes from
the decision problem
of determining whether $U$ has an eigenphase $3 \delta$ or $U=I$.
 This problem is equivalent to determining whether the smallest eigenvalue of the current $H$ is 0 or $\delta'=(\sin(3\delta)/(1+\sin(3\delta)))^2 \ge c \delta^2$, respectively, for some constant $c>0$ (assuming e.g.\ $3 \delta \le \pi/2$).  
Consequently, $\Omega(\log(1/\epsilon)/\gamma \sqrt{\delta'})$ queries to the black box or block-encoding of $G$ are necessary 
for estimating the ground-state energy of $H$ up to $\pm \delta'/3$. Given that SOSSA provides the same quadratic improvement in error dependence, this lower bound demonstrates that SOSSA is near-optimal  when treating the operator $G$ derived from $H$ as a black box.

\section{Preliminaries}\label{sec:prelim}

A real univariate polynomial of degree $d$ is a function $p:\mathbb{R}\to\mathbb{R}$ of the form
\[
p(z)=\sum_{i=0}^d a_i z^i,
\]
for some real coefficients $a_0,\ldots,a_d$. 

A complex $N$-variate multilinear polynomial of degree $d$ is a function $Q:\mathbb{R}^N\to\mathbb{C}$ of the form
\[
Q(x)=\sum_{S\subseteq[N], |S|\leq d} c_S \prod_{j\in S}x_j,
\]
for some complex coefficients $c_S$. The term $\prod_{j\in S}x_j$ is called a monomial, with the convention that it is the constant-1 function  for $S=\emptyset$.
Such an $N$-variate polynomial~$Q$ can be ``symmetrized'' to a univariate polynomial $q$ such that $q(|x|)=Q(x)$ 
for all $x\in\01^N$, where $|x|$ denotes the Hamming weight (number of 1s) of $x$, as follows.
Observe that the $N$-variate polynomial
\[
\frac{1}{N!}\sum_{\pi\in S_N} Q(\pi(x))
\]
has a coefficient for the $S$-monomial that only depends on the size $|S|$. Hence this polynomial actually is a linear combination of $\binom{|x|}{|S|}$, which counts the number of degree-$|S|$ monomials that take value~1 on an input of weight~$|x|$, and which is a univariate polynomial $|x|(|x|-1)\cdots(|x|-|S|+1)/|S|!$ in $|x|$ of degree $|S|$.

Coppersmith and Rivlin~\cite{coppersmith&rivlin:poly} proved the following inequality, which allows to turn an upper bound for $p$ on an equally-spaced set of points into a (larger) upper bound on the whole real interval:
\begin{quote}
There exist universal constants $a,b>0$ such that every polynomial $p$ of degree $d \leq n$ that has
absolute value
$|p(i)| \leq 1$ for all integers $i \in\{0,\ldots,n\}$,\\ 
satisfies
$|p(z)| \leq ae^{bd^2/n}$ for all real $z \in [0, n]$.
\end{quote}
The degree-$d$ Chebyshev polynomial of the first kind can be defined as
\[
T_d(z) = \frac{1}{2}\left( (z+\sqrt{z^2-1})^d + (z-\sqrt{z^2-1})^d\right).
\]
The minus sign in the second term lets the square-roots terms vanish if we work out the power-$d$ terms, so (despite appearances to the contrary) this is actually a polynomial of degree~$d$. 
Using $1+y\leq e^y$, it is easy to show
\[
T_d(1+\mu)\leq e^{2d\sqrt{2\mu+\mu^2}}.
\]
These Chebyshev polynomials have absolute value $\leq 1$ on the domain $z\in[-1,1]$ (which is easier to see from their equivalent definition $T_d(z)=\cos(d\arccos(z))$).
They have the extremal property that no degree-$d$ polynomial $p$ with that same property can grow faster than $T_d$: $|p(1+\mu)|\leq T_d(1+\mu)$ for all $\mu\geq 0$.

\section{Proof for the case with non-unique ground state, $N=\log(1/\eps)/\gamma^2$}\label{sec:lowerboundnonunique}

In this section we prove the following.

\begin{theorem}
There exists a family of $O(\log(1/\eps)/\gamma^2)$-dimensional diagonal Hamiltonians~$H$, of operator norm $\norm{H}=O(1)$, such that every quantum algorithm that finds the ground-state energy of $H$ within $\pm\delta$ with success probability $\geq 1-\eps$, using $T$ controlled  applications of $U=e^{\ri H}$ and $U^{-1}$ and unlimited applications of a unitary $A$ and $A^{-1}$, for which $A\ket{0}$ is promised to have overlap at least $\gamma$ with the ground space of $H$,
needs at least 
\[
T=\Omega(\log(1/\eps)/\gamma\delta).
\]
\end{theorem}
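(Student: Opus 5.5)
We prove the theorem with the hard family suggested in the introduction, analysed by the polynomial method. Set $N=\lceil C\log(1/\eps)/\gamma^2\rceil$ for a suitable constant $C$ and $t=\gamma^2N$. For each $x\in\01^N$ define $H_x=\mathrm{diag}(\{3\delta x_j\}_{j\in[N]})$, so $U_x=e^{\ri H_x}=\mathrm{diag}(\{\omega^{x_j}\}_j)$ with $\omega=e^{3\ri\delta}$, and take $A\ket{0}=\frac{1}{\sqrt N}\sum_j\ket{j}$. We restrict to inputs with $|x|=0$ or $|x|=t$, and use the largest eigenvalue, which the introduction notes is equally hard. If $|x|=t$, the top eigenspace is spanned by the $\ket{j}$ with $x_j=1$, and the guiding state has overlap exactly $\sqrt{t/N}=\gamma$ with it. If $|x|=0$, then $H=0$ and any state has overlap $1$. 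Estimating $\lambda_{\max}$ within $\pm\delta$ distinguishes $0$ from $3\delta$. So a $T$-query algorithm decides ``$|x|=0$ vs.\ $|x|=t$'' with error $\le\eps$. Since $A$ is fixed and independent of $x$, its applications are free.

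\textbf{Step 1: polynomial representation.} Each controlled application of $U_x^{\pm1}$ acts on basis state $\ket{j}$ by multiplying amplitudes by $\omega^{\pm x_j}$. Because $x_j\in\01$, we can write $\omega^{x_j}=1+(\omega-1)x_j$, which is affine in $x_j$. After $T$ queries, every amplitude is a complex multilinear polynomial of degree $\le T$ in $x$. The acceptance probability $P(x)$ is then a real multilinear polynomial of degree $\le 2T$, and symmetrizing gives a univariate polynomial $p$ with $p(0)\le\eps$ and $p(t)\ge1-\eps$. This direct version loses the $1/\delta$ factor, however.

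\textbf{Step 2: extracting the $1/\delta$ factor.} Instead we expand in powers of $\omega$. Write the amplitude as a Laurent polynomial in $z=\omega$ of degree $\le T$ in $z$ and $z^{-1}$, whose coefficients are multilinear in $x$. Concretely, $\omega^{kx_j}$ for $k\in\{\pm1\}$ can be written via $(1-x_j)+x_j z^{k}$. So $P(x)=F(x,\omega)$, where $F$ has degree $\le T$ in $(z,\bar z)$ jointly. The idea is to regard $\theta=3\delta$ as a continuous parameter, i.e.\ the phase $e^{\ri\theta x_j}$ for \emph{real} $x_j\in[0,1]$. To make this meaningful, replace the discrete $x$ by the symmetrized quantity: the symmetrized acceptance probability on weight $k$ equals an expectation over $k$ marked indices. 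The key claim is the following. Choose an integer $m\approx\pi/(3\delta)$ and embed $m$ independent Boolean variables per index, so that the query $\mathrm{diag}(e^{\ri\theta x_j})$ is simulated by exact fractional phases. Conversely, view the $\delta$-query algorithm as a function of $s_j=e^{3\ri\delta x_j}$. The amplitudes are then trigonometric polynomials of degree $\le T$ in the angles $3\delta x_j$. Summing over symmetric inputs gives a univariate trigonometric polynomial $g(\phi)$ in $\phi$ of degree $O(T\cdot\delta\cdot N)$ after substituting $x$ with weight variable scaled. I expect to make this rigorous via the substitution $u=\sin^2(3\delta x_j/2)$. For $x_j\in\01$ this is just $x_j\sin^2(3\delta/2)$, so the acceptance probability is a polynomial $q$ of degree $\le 2T$ in the variables $y_j=x_j\sin^2(3\delta/2)=\Theta(\delta^2)x_j$. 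After symmetrizing, $q$ is a univariate polynomial in $w=|x|\cdot\Theta(\delta^2)$ that is bounded in $[0,1]$ at the points $w=k\sin^2(3\delta/2)$, $k=0,\dots,N$. It satisfies $q(0)\le\eps$ and $q(t\sin^2(3\delta/2))\ge1-\eps$.

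\textbf{Step 3: the analytic bound.} This is the main obstacle. We need boundedness of $q$ on a range of $w$ much larger than $[0,N\delta^2]$, so that the ``effective degree'' becomes $T\delta$ rather than $T$. The plan is to use that each original amplitude is also a bounded trigonometric polynomial in the continuous angle. Hence the amplitude remains of modulus $\le1$ for every real angle in $[0,\pi]$, i.e.\ for $y_j\in[0,1]$ componentwise; this needs to be justified via unitarity with fractional queries $e^{\ri\theta x_j}$, $\theta\in[0,\pi]$. This gives $|q|\le1$ at the points $w=k\sin^2(\theta/2)$ for all $\theta$, in particular on the grid $k\in\{0,\dots,N\}$ with $\theta=\pi$. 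Rescaling $w\mapsto w/\sin^2(3\delta/2)$ then yields a polynomial $r$ of degree $2T$ bounded by $1$ at integers $0..N$. Moreover, along a whole one-parameter family it is bounded. The intended payoff is that $p(k)$, as a polynomial in $\delta^2k$, is a composition whose dependence on $k$ has effective degree $O(T\delta)$. I would try to prove this by writing the symmetrized amplitude as a polynomial of degree $T$ in $(\cos\theta,\sin\theta)$ times a polynomial in $k$, and using a Bernstein/Markov-type inequality in $\theta$.

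Given effective degree $d=O(T\delta)$ with $p$ bounded on $\{0,\dots,N\}$, we finish as in \cite{BCWZ99}. Let $p'=1-2p$ rescaled so that $|p'|\le1$ on the integers; then $p'(0)\ge1-2\eps$, while $p'$ must move by $\Omega(1)$ at $t$. By Coppersmith--Rivlin, $|p'|\le ae^{bd^2/N}$ on $[0,N]$. Comparing with $T_d$ around the point $0$, which lies at relative distance $\mu\approx t/N=\gamma^2$ outside the interval where $p'-p'(t)$ is tiny, gives $e^{bd^2/N}\cdot e^{O(d\sqrt{t/N})}\ge\Omega(1/\eps)$. With $N=C\log(1/\eps)/\gamma^2$ and $d\le\gamma N$, the first exponent is small, and the second forces $d\gamma=\Omega(\log(1/\eps))$. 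So $T\delta=\Omega(\log(1/\eps)/\gamma)$, i.e.\ $T=\Omega(\log(1/\eps)/\gamma\delta)$.
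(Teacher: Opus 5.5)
There are two genuine gaps, and each one alone breaks the proof. \textbf{First, the degree reduction is never proved, and the routes you sketch do not work.} The substitution $y_j=x_j\sin^2(3\delta/2)$ only rescales Boolean variables, so the degree stays $2T$. The acceptance probability is also not a $\theta$-independent polynomial in $y$, so boundedness cannot be transferred to the points $w=k\sin^2(\theta/2)$ for other $\theta$. For example, one controlled query on $\ket{+}\ket{j}$, followed by a measurement in the basis $(\ket0\pm\ri\ket1)/\sqrt2$, accepts with probability $\frac12+\frac{x_j}{2}\sin\theta=\frac12+y_j\cot(\theta/2)$. At $\theta=3\delta$ this polynomial exceeds $1$ at $y_j=1$. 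The same example shows that an \emph{exact} degree bound $O(T\delta)$ is false. The true statement is approximate, and your own identity is one line from it. Each query is $I+(\omega^{\pm1}-1)\Pi_x$ with $\Pi_x=\mathrm{diag}(x_j)$, $\norm{\Pi_x}\le1$ and $|\omega^{\pm1}-1|=2\sin(3\delta/2)=O(\delta)$. Expand the $T$-query unitary and drop every term with more than $K=cT\delta+\log(3/\eps)$ factors $\Pi_x$. For large enough $c$ this changes the unitary by at most $\sum_{k>K}\binom{T}{k}O(\delta)^k\le 2^{-K}\le\eps/3$ in operator norm. Hence the acceptance probability is $\eps$-approximated by a nonnegative polynomial of degree $O(T\delta+\log(1/\eps))$. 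This is the paper's Lemma~\ref{lem:Tfracto}, which uses the decomposition $\frac{1+e^{\ri\theta}}{2}I+\frac{1-e^{\ri\theta}}{2}O_x$. The additive $\log(1/\eps)$ is harmless once $\gamma$ is small.

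\textbf{Second, your hard family is too weak.} Restricting to $|x|\in\{0,t\}$ constrains $p$ only at the single point $t$, and only to $[0,1]$ elsewhere. So there is no interval on which $p$ is tiny, and the Chebyshev comparison gives nothing $\eps$-dependent. Concretely, take $d=\lceil\pi/(2\gamma)\rceil$ and $N'=t/\sin^2(\pi/(2d))\ge N$. Then $p(z)=\frac12\bigl(1+T_d(1-2z/N')\bigr)$ has degree $O(1/\gamma)$, $p(0)=1$, $p(t)=0$, and $0\le p\le1$ on $[0,N]$. This is not an artifact of the polynomial method. When $\pi/(3\delta)$ is an integer, $U^{\pi/(3\delta)}$ is the standard phase oracle. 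Exact amplitude amplification with the known fraction $t/N$ of marked indices, plus one check query, then decides $|x|=0$ versus $|x|=t$ with zero error using $O(1/\gamma\delta)$ applications of $U$. So the claimed bound is false for your family. The fix is to keep every $x$ with $|x|\ge t$ in the family; each still has overlap $\ge\gamma$. Then $p$ is $\eps$-small on every integer in $[t,N]$, and Coppersmith--Rivlin extends this to the real interval at a cost of $2^{O(D^2/N)}$. The Chebyshev bound at distance $\mu=\Theta(\gamma^2)$ then gives $\log(1/\eps)\le O(D^2/N+D\gamma)$. With $N=\Theta(\log(1/\eps)/\gamma^2)$ and $D=O(T\delta+\log(1/\eps))$, this yields $T=\Omega(\log(1/\eps)/\gamma\delta)$.
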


For simplicity we will prove the lower bound for approximating the top eigenvalue of~$H$ rather than the bottom eigenvalue (i.e., ground-state energy) since it will be convenient to use 0 as a baseline, but those problems are easily seen to be equivalent.
At various points in the proofs we will assume $\delta,\gamma,\eps$ are at most some small constant; this is justified because the lower bound was already known if any one of these 3 parameters is $\Omega(1)$.
Choose $N=\log(1/\eps)/\gamma^2$ (ignore rounding for simplicity). Consider the  $(N+1)$-dimensional Hilbert space spanned by basis states $\ket{0},\ket{1},\ldots,\ket{N}$.
Our hard family of instances is the following.
For $x=x_1\ldots x_N\in\01^N$ and $\theta\in[0,2\pi)$, define
\[
\mbox{Hamiltonian }H=\mathrm{diag}(0,\{\theta x_j\}_{j\in[N]})
\mbox{ and corresponding  unitary }U=e^{\ri  H}=\mathrm{diag}(1,\{e^{\ri  \theta x_j}\}_{j\in[N]}).
\]
If $x\neq 0^N$ then the top eigenspace of~$H$ is $T_x=\mathrm{span}\{\ket{j}:x_j=1\}$, which has dimension equal to the Hamming weight~$|x|$.
Note that $U$ corresponds to a $\theta/\pi$-fractional query to the string $x\in\01^N$.
Because $U$ always acts like identity on the basis state $\ket{0}$, we do not need to introduce an additional controlled-$U$ with a separate control qubit. Accordingly, 
our lower bound on the number of applications of $U$ and $U^{-1}$ will be a lower bound on the number of applications of their controlled versions as well. 

Our lower bound will be for distinguishing $x=0^N$ (where $U=I$) from $x$s of Hamming weight $|x|\geq \gamma^2 N$ for the specific value $\theta=3\delta$.
Define $A\ket{0}=\frac{1}
{\sqrt{N}}\sum_{j=1}^N\ket{j}$ to be the guiding state.
This has squared overlap $\sum_{j:x_j=1} 1/N\geq\gamma^2$ with the top eigenspace, as promised. However, the guiding state $A\ket{0}$ is actually independent of $U$ and easy to generate for any algorithm, so we can just ignore it for the remainder of this lower bound proof.

Now consider an algorithm that makes $T$ calls to $U$ and $U^{-1}$, thinking of $x$ and $\theta$ as variables, and that (whenever $U$ comes from the above hard family) approximates the maximal eigenvalue of~$H$ to within $\pm\delta$ with success probability $\geq 1-\eps$.
After this algorithm we output 1 if the estimate is $\leq 1.5\delta$ and we output 0 if the estimate is $> 1.5\delta$.
This algorithm distinguishes $x=0^N$ from $x$s of weight $\geq\gamma^2 N$ when $\theta=3\delta$.
By pushing intermediate measurements to the end, we may assume it is fully coherent, with only a measurement of the first qubit in the final state to determine the binary output.

As usual when applying the polynomial method~\cite{BBCMW01}, we start by showing that the acceptance probability of a few-query algorithm corresponds to a low-degree polynomial. The following variant of the method handles fractional queries. 

\begin{lemma}\label{lem:Tfracto}
Consider a quantum algorithm that makes $T$ $\phi$-fractional queries to $x\in\01^N$, for some fixed $\phi\in[0,1]$. 
Let $P(x)$ be the probability that it outputs 1 on input $x$.
For every $\eps\in(0,1)$, there exists a multilinear real polynomial $Q(x)$ of degree $D=O(T\phi+\log(1/\eps))$ such that $Q(x)\geq 0$ and $|P(x)-Q(x)|\leq\eps$ for all $x\in\01^N$.
\end{lemma}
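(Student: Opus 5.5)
The plan is to approximate each fractional query by a low-degree polynomial in $x$ and then propagate through the algorithm; the only real issue is making the degree additive, $O(T\phi+\log(1/\eps))$, rather than $T\cdot O(\phi+\log(T/\eps))$.

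A $\phi$-fractional query acts as $\ket{j}\mapsto e^{\ri\pi\phi x_j}\ket{j}$, with $\ket{0}$ left invariant. Since $x_j\in\01$, we have exactly
\[
e^{\ri\pi\phi x_j}=1+(e^{\ri\pi\phi}-1)x_j ,
\]
which is linear in $x_j$. Used naively, this gives amplitudes of degree $T$ and hence degree $2T$ for $P(x)$. That is too large when $\phi\ll 1$, so the exact linear form must be replaced by something cheaper.

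The key idea is to treat $T$ consecutive $\phi$-fractional queries to the same index as a single evolution. The whole algorithm can be viewed as a product of $T$ phase oracles interleaved with fixed unitaries. By the standard fractional-query simulation (Cleve--Gottesman--Mosca--Somma--Yonge-Mallo), an algorithm making $T$ $\phi$-fractional queries with total query time $\tau=T\phi$ can be simulated, up to error $\eps$ in the final state, by a superposition over the query-count register of discrete queries. The number of discrete queries actually applied is Poisson-like with mean $O(\tau)$. Truncating at $K=O(\tau+\log(1/\eps))$ loses amplitude at most $\eps$. I expect to have to re-derive this carefully, since the paper allows $U^{-1}$ as well; this is handled by allowing negative fractional powers, which are expanded the same way.

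A cleaner self-contained route, which I would try first, is the following. Write each query operator as
\[
O_\phi=\sum_j e^{\ri\pi\phi x_j}\ketbra{j}{j}=e^{\ri\pi\phi X},\qquad X=\mathrm{diag}(0,x_1,\ldots,x_N).
\]
The algorithm's final state is then $\prod_t V_t e^{\ri\pi\phi_t X}$ applied to a fixed state. Expand each exponential in its Taylor series in $X$, noting that $X^k=X$ for $k\geq 1$ because its entries are in $\01$. Then expand the full product and group terms by the total number $m$ of factors in which $X$ (rather than $I$) is chosen. The terms with exactly $m$ selected $X$-factors have entries that are degree-$\le m$ multilinear polynomials in $x$.

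The crucial estimate is a bound on the norm of the part with $m$ selections. I would do this with a Dyson-series expansion: rewrite $e^{\ri\pi\phi X}=e^{\ri\pi\phi/2}e^{\ri\pi\phi(X-I/2)}$, where $\|X-I/2\|=1/2$. The $m$-th order Dyson term of the time-ordered product with total time $\pi\tau$ then has norm at most $(\pi\tau/2)^m/m!$, and each such term is a polynomial of degree $\le m$ in $x$. Summing over $m>K$ with $K=e\pi\tau+\ln(1/\eps)$ gives a tail of at most $\eps$, by the Stirling bound $(c\tau)^m/m!\le (e c\tau/m)^m$. This is the main step to get right: the interleaving unitaries $V_t$ make this a time-dependent interaction-picture expansion rather than a plain exponential. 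Because the interaction-picture Hamiltonian at each time is a unitary conjugate of $X-I/2$, its norm is still $1/2$, so the same bound goes through.

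With this in hand, let $\ket{\psi_K(x)}$ be the truncated final state. Its amplitudes are polynomials of degree $\le K$ in $x$, and $\|\ket{\psi(x)}-\ket{\psi_K(x)}\|\le\eps'$. Set
\[
Q(x)=\norm{\Pi_1\ket{\psi_K(x)}}^2 .
\]
This is a sum of squared moduli of degree-$K$ polynomials, so it is a real, nonnegative polynomial of degree $2K$. Multilinearize it using $x_j^2=x_j$, which preserves its values on $\01^N$. Finally, $|P-Q|\le 2\eps'+\eps'^2$, so choosing $\eps'=\eps/3$ gives degree
\[
D=O(T\phi+\log(1/\eps)),
\]
as required.
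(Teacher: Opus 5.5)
Your proof is correct, but it takes a heavier route than the paper, which stays entirely discrete. The paper writes each query exactly as $O_x^{\pm\phi}=\frac{1+e^{\pm\ri\pi\phi}}{2}I+\frac{1-e^{\pm\ri\pi\phi}}{2}O_x$, with $O_x$ the $\pm1$ phase query. It expands the circuit into $2^T$ products, drops those with more than $K$ factors $O_x$, and bounds what is dropped by the binomial tail $\sum_{k>K}\binom{T}{k}|\sin(\pi\phi/2)|^k\le\sum_{k>K}(e\pi T\phi/(2k))^k$.

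Your claim that the exact linear form ``must be replaced by something cheaper'' is therefore mistaken. What is needed is to truncate the expansion, not to replace it. Your own step $e^{\pm\ri\pi\phi X}=I+(e^{\pm\ri\pi\phi}-1)X$ already finishes the proof, because $\norm{X}\le 1$ and $|e^{\pm\ri\pi\phi}-1|\le\pi\phi$. So the terms with more than $K$ selected $X$'s have total norm at most $\sum_{m>K}\binom{T}{m}(\pi\phi)^m\le\sum_{m>K}(e\pi T\phi/m)^m$.

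Your Dyson-series argument is nonetheless valid. Up to a global phase, the circuit equals $G_T\prod_{t=T-1}^{0}e^{\pm\ri\pi\phi\,G_t^{-1}(X-I/2)G_t}$ with $G_t=W_t\cdots W_0$ independent of $x$. The order-$m$ Dyson term has entries of degree at most $m$ in $x$ and norm at most $(\pi\tau/2)^m/m!$, so truncating the Dyson series at order $K$ gives your $\ket{\psi_K(x)}$. Say explicitly that it is the Dyson series you truncate: its order-$m$ term is a different grouping from ``the part with $m$ selections'' of the discrete expansion you set up just before.

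The trade-off is this. Your route applies verbatim to arbitrary time-dependent continuous-time evolutions of total cost $\sum_i|\tau_i|$, with a Poisson-type tail. The paper's route is a finite, elementary expansion with no interaction picture or time-ordered integrals.

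Two small fixes. With $K=e\pi\tau+\ln(1/\eps)$ the tail is only about $2^{-K}\approx\eps^{\ln 2}$, so take $K\ge\log_2(3/\eps)$ or use a larger constant. The CGMSY detour is not used by your argument and can be dropped.
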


\begin{proof}
Let $O_x=\mbox{diag}(1,\{(-1)^{x_j}\}_{j\in[N]})$ be the usual (controlled) phase query to $x\in\01^N$, and 
\begin{equation}\label{eq:decompOxdelta}
O_x^\phi=\mbox{diag}(1,\{e^{\ri \pi\phi x_j}\}_{j\in[N]})=\frac{1+e^{\ri \pi\phi}}{2}I+\frac{1-e^{\ri \pi\phi}}{2}O_x
\end{equation}
be the fractional query.
We have a similar decomposition for the inverse $O_x^{-\phi}$.
Writing $O_x^{\pm\phi}$ as  placeholders for something that could be $O_x^{\phi}$ or $O_x^{-\phi}$, the $T$-query algorithm corresponds to a unitary 
\[
{\cal A}=W_TO_x^{\pm\phi} W_{T-1}O_x^{\pm\phi}\cdots O_x^{\pm\phi} W_1 O_x^{\pm\phi} W_0,
\]
for some fixed, input-independent unitaries $W_T,\ldots,W_0$, applied to fixed initial state $\ket{0}$ and followed by a measurement of the first qubit to produce the binary output.
Use the decomposition of $O_x^{-\phi}$ into $I$ and $O_x$ of Eq.~\eqref{eq:decompOxdelta} and its analogue for the inverse, decomposing $\cal A$ as a sum of $2^T$ terms, and let ${\cal A}_K$ be the truncated sum obtained by omitting all terms that have more than $K=cT\phi+\log(3/\eps)$ $O_x$'s in them, for some sufficiently large constant $c$ to be determined later (we will choose $c$ such that this $K$ is an integer).
Using triangle inequality and the fact that the $W_j$'s are unitary, and $\binom{T}{k}\leq (eT/k)^k$, we show that ${\cal A}$ and ${\cal A}_K$ are close in operator norm:
\[
\norm{{\cal A}-{\cal A}_K}\leq \sum_{k=K+1}^T\binom{T}{k}\left|\frac{1+e^{\ri \pi\phi}}{2}\right|^{T-k}\cdot\left|\frac{1-e^{\ri \pi\phi}}{2}\right|^k
\leq \sum_{k=K+1}^T O(T\phi/k)^k\leq O( T\phi/K)^K\leq 2^{-K}\leq\eps/3,
\]
where the second and third inequality follow by choosing the constant $c$ in the definition of $K$ large enough so that the $O( T\phi/K)$ term is $\leq 1/2$.

By the usual polynomial-method inductive argument, the final entries of the vector ${\cal A}_K\ket{0}$ can be shown to be complex $N$-variate multilinear polynomials in $x$ of degree $\leq K$: the initial amplitudes are constants independent of $\theta$ and $x$ (i.e., polynomials of degree~0), each $O_x$ increases the degree by at most~1, and the intermediate input-independent unitaries just take linear combinations of amplitudes so they do not increase degree.

Let $\Pi$ be the projector on the entries corresponding to basis states that start with a~1, and define $Q(x)=\norm{\Pi{\cal A}_K\ket{0}}^2$, which is the ``acceptance probability'' of ${\cal A}_K$ (we use scare quotes because ${\cal A}_K$ is not quite unitary). Being the sum-of-squares of degree-$K$ polynomials, this~$Q$ is a polynomial of degree $2K$. It can be multilinearized by changing all $x_j^d$ for $d>1$ to $x_j$ (which does not change the value on 0/1 input variables).
We have, for all $x\in\01^N$, 
\begin{align*}
|P(x)-Q(x)| & =\left|\norm{\Pi{\cal A}\ket{0}}^2 - \norm{\Pi{\cal A}_K\ket{0}}^2\right|\\
& =\left|\norm{\Pi{\cal A}\ket{0}} - \norm{\Pi{\cal A}_K\ket{0}}\right|\cdot \left(\norm{\Pi{\cal A}\ket{0}} + \norm{\Pi{\cal A}_K\ket{0}}\right)\\
& \leq \left|\norm{\Pi{\cal A}\ket{0} - \Pi{\cal A}_K\ket{0}}\right|\cdot    \left(\norm{\Pi{\cal A}\ket{0}} + \norm{\Pi{\cal A}_K\ket{0}}\right)\\
& \leq \left|\norm{{\cal A}\ket{0} - {\cal A}_K\ket{0}}\right|\cdot |1+1+\eps|\leq\eps,
\end{align*}
hence $Q$ $\eps$-approximates $P$.
\end{proof}

In order to lower bound $T$, we now need tools to lower bound the degree of the polynomial that comes out of the previous lemma.
To that end, define ${\cal K}=\{0,\gamma^2 N,\ldots,N\}$,
 and define $f:{\cal K}\to\01$ by $f(0)=1$ and $f(k)=0$ for all $k\in\{\gamma^2 N,\ldots,N\}$.
For a given $D$, let 
\[
\eps_D=\min_{\mbox{polynomial $p$ of degree}\leq D}\max_{k\in {\cal K}}|f(k)-p(k)|
\]
be the minimal worst-case error that a degree-$D$ real univariate polynomial can attain for approximating $f$.
Note that $\eps_D\leq 1/2$, since we can always take $p\equiv 1/2$ to be our approximating polynomial.
Using the tools from~\cite{BCWZ99} (see our Section~\ref{sec:prelim}) we can show the following lower bound on $\eps_D$.

\begin{lemma}\label{lem:CRChebyshev}
If $q$ is degree-$D$ real univariate polynomial such that $q(0)\geq 1$ and $|q(a)|\leq\eta$ for all $a\in[\gamma^2,1]$, then $\eta\geq 2^{-O(D\gamma)}$.
\end{lemma}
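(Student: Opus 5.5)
The plan is to rescale $q$ so that the interval where it is small becomes $[-1,1]$, and then compare it to a Chebyshev polynomial, whose extremal growth outside $[-1,1]$ is stated in Section~\ref{sec:prelim}. Note that the statement lets $q$ be small on the whole real interval $[\gamma^2,1]$, so Coppersmith--Rivlin is not needed here; that step belongs in the later reduction from the integer points of ${\cal K}$ to the interval.

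\textbf{Rescaling.} Let $\ell(a)=\frac{2a-(1+\gamma^2)}{1-\gamma^2}$, the affine map sending $[\gamma^2,1]$ onto $[-1,1]$. It sends $a=0$ to
\[
z_0=-\frac{1+\gamma^2}{1-\gamma^2}=-(1+\mu),\qquad \mu=\frac{2\gamma^2}{1-\gamma^2}.
\]
Define $r(z)=q(\ell^{-1}(-z))/\eta$. This is a degree-$D$ polynomial with $|r(z)|\le 1$ on $[-1,1]$. (If $\eta=0$, then $q$ vanishes on an interval, so $q\equiv 0$, which contradicts $q(0)\ge 1$; hence $\eta>0$.) Since $q(0)\ge 1$, we also get $r(1+\mu)\ge 1/\eta$.

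\textbf{Chebyshev comparison.} By the extremal property from Section~\ref{sec:prelim}, $1/\eta\le |r(1+\mu)|\le T_D(1+\mu)\le e^{2D\sqrt{2\mu+\mu^2}}$. We may assume $\gamma$ is at most a small constant, say $\gamma^2\le 1/2$. Then $\mu\le 4\gamma^2$, so $\sqrt{2\mu+\mu^2}\le\sqrt{8\gamma^2+16\gamma^4}=O(\gamma)$. This gives $\eta\ge e^{-O(D\gamma)}=2^{-O(D\gamma)}$, as claimed.

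\textbf{Main obstacle.} The only delicate point is that the extremal property is stated for $|p(1+\mu)|$ with $\mu\ge 0$. We therefore need the point $0$ to land on the positive side after rescaling, which is why the reflection $z\mapsto -z$ is built into the definition of $r$. We also need $\mu=\Theta(\gamma^2)$ rather than something larger, because that is exactly what makes the exponent $\sqrt{\mu}$ scale like $\gamma$. Everything else is routine.
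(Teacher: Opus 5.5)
Your proof is correct and follows the paper's argument: you affinely rescale $[\gamma^2,1]$ to $[-1,1]$ so that $0$ lands at $1+\mu$ with $\mu=\Theta(\gamma^2)$, then apply the Chebyshev extremal bound $T_D(1+\mu)\le e^{2D\sqrt{2\mu+\mu^2}}$. You also make explicit several details the paper leaves implicit: the reflection $z\mapsto -z$, the exact value $\mu=2\gamma^2/(1-\gamma^2)$, the exclusion of $\eta=0$, and the need for $\gamma$ bounded away from $1$, which the paper covers by its standing small-constant assumption.
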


\begin{proof}
Rescaling the domain $[\gamma^2,1]$ to $[-1,1]$ in such a way that 0 gets sent to $1+\mu$ for $\mu=\Theta(\gamma^2)$, and dividing the value of the polynomial~$q$ by $\eta$, we obtain a degree-$D$ real polynomial~$r$ that takes values $r(z)\in[-1,1]$ on the interval $z\in[-1,1]$. By the extremal properties of Chebyshev polynomials we can bound:
\[
r(1+\mu)\leq T_{D}(1+\mu)\leq e^{2D\sqrt{2\mu+\mu^2}}=2^{O(D\gamma)}. 
\]
Hence we obtain
\[
1\leq q(0)=\eta\cdot r(1+\mu)\leq \eta \cdot 2^{O(D\gamma)},
\]
and rearranging gives the lemma.
\end{proof}

\begin{lemma}\label{lem:epsD}
If $D\leq (1-\gamma^2)N$, then $\eps_D\geq 2^{-O(D^2/N + D\gamma)}$.
\end{lemma}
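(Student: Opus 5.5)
Let $p$ be an optimal degree-$D$ polynomial, so $|p(k)-f(k)|\le\eps_D$ for all $k\in{\cal K}$. The plan is to turn $p$ into a polynomial satisfying the hypotheses of Lemma~\ref{lem:CRChebyshev}. Coppersmith--Rivlin only controls $p$ off the integer points, so that has to be handled first. Since $\eps_D\le 1/2$, we have $p(0)\ge 1/2$ and $|p(k)|\le\eps_D$ for all integers $k\in\{\gamma^2N,\ldots,N\}$.

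First I would shift the variable, setting $s(y)=p(\gamma^2N+y)$. This is a degree-$D$ polynomial with $|s(i)|\le\eps_D$ for all integers $i\in\{0,\ldots,n\}$, where $n=(1-\gamma^2)N$. The hypothesis $D\le(1-\gamma^2)N$ gives exactly $D\le n$, so Coppersmith--Rivlin applies to $s/\eps_D$. It yields
\[
|p(z)|\le \eps_D\, a e^{bD^2/n}\quad\text{for all real } z\in[\gamma^2N,N].
\]
Since $\gamma$ is assumed small, $n\ge N/2$, so this bound is $\eps_D\cdot 2^{O(D^2/N)}$.

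Next I would rescale to the unit interval and normalize. Define $q(t)=p(tN)/p(0)$, which has degree $D$ and satisfies $q(0)=1$. On $t\in[\gamma^2,1]$ we get $|q(t)|\le 2\eps_D\, a e^{bD^2/n}=:\eta$. Lemma~\ref{lem:CRChebyshev} then gives $\eta\ge 2^{-O(D\gamma)}$. Rearranging,
\[
\eps_D\ge \frac{1}{2a}e^{-bD^2/n}\,2^{-O(D\gamma)}=2^{-O(D^2/N+D\gamma)}.
\]
The constant factor $1/(2a)$ is absorbed into the $O(\cdot)$ when $D\ge1$. When $D=0$ the bound is trivial, since then $\eps_D=1/2$.

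The main obstacle is the transfer from integer points to the whole interval. This is exactly where the degree condition $D\le(1-\gamma^2)N$ is needed, and where the extra $D^2/N$ term in the exponent comes from. Once Coppersmith--Rivlin is applied to the shifted polynomial, the remaining steps are bookkeeping.
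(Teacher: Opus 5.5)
Your proof is correct and follows the paper's argument step for step. You apply Coppersmith--Rivlin on the integer points of $[\gamma^2N,N]$; your explicit shift to $\{0,\ldots,n\}$ with $n=(1-\gamma^2)N$ just makes that application precise. You then rescale to $[\gamma^2,1]$, normalize the value at $0$ (you divide by $p(0)\ge 1/2$, the paper by $1-\eps_D\ge 1/2$), and invoke Lemma~\ref{lem:CRChebyshev}.

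One quibble: your claim that the constant $1/(2a)$ ``is absorbed into the $O(\cdot)$ when $D\ge 1$'' is not literally true when $D^2/N+D\gamma$ is tiny. Likewise, at $D=0$ the right-hand side is $2^0=1>\eps_D$. The paper's proof has the same looseness, so the lemma should be read up to a constant factor, which is all the application needs.
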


\begin{proof}
Let $p$ be a degree-$D$ real polynomial achieving the minimal error $\eps_D$ in approximating~$f$.
Since $|p(z)|\leq \eps_D$ for all $(1-\gamma^2)N+1$ \emph{integers} $z\in\{\gamma^2N,\ldots,N\}$, by the Coppersmith-Rivlin inequality we have $|p(z)|\leq \eps_D \cdot 2^{O(D^2/N)}$ for all \emph{real} $z\in[\gamma^2N,N]$. Rescaling the domain $[\gamma^2N,N]$ to $[\gamma^2,1]$ and dividing by $1-\eps_D\geq 1/2$, we obtain a degree-$D$ real polynomial~$p$ that takes values $p(0)\geq 1$ and $|p(z)|\leq\eps_D \cdot 2^{O(D^2/N)}:=\eta$ for all $z\in[\gamma^2,1]$. Applying Lemma~\ref{lem:CRChebyshev} gives
\[
\eta\geq 2^{-O(D\gamma)},
\]
and rearranging gives the lemma.
\end{proof}

Now consider an algorithm ${\cal A}$ with $T$ $\phi$-fractional queries for our hard family, with $\phi=\theta/\pi$.
Its acceptance probability $P(x)$ for fixed $\theta=3\delta$ $\eps$-approximates~$f(|x|)$: on $x=0^N$ the algorithm returns~1 with probability $\geq 1-\eps$, and if $x$ has Hamming weight $|x|\geq\gamma^2 N$ then  the algorithm returns~1 with probability $\leq \eps$. The polynomial $Q(x)$ of degree $D=O(T\delta+\log(1/\eps))$
that we get from Lemma~\ref{lem:Tfracto} $\eps$-approximates $P(x)$. 
As explained in  Section~\ref{sec:prelim}, we can symmetrize $Q(x)$ to a univariate polynomial~$q$ of degree $\leq D$ such that $q(|x|)=Q(x)$, and hence $q$ is nonnegative and it $2\eps$-approximates~$f$. By Lemma~\ref{lem:epsD}, the approximation error of $q$ (w.r.t.\ $f$) is $\geq 2^{-O(D^2/N + D\gamma)}$, hence 
\[
\log(1/2\eps)\leq O(D^2/N + D\gamma)\leq O(T^2\delta^2\gamma^2/\log(1/\eps) + \log(1/\eps)\gamma^2 + T\delta\gamma + \log(1/\eps)\gamma),
\]
where we used  $N=\log(1/\eps)/\gamma^2$. This implies $T=\Omega(\log(1/\eps)/\gamma\delta)$ if $\eps$ and $\gamma$ are below a sufficiently small constant (the latter ensures that the $O(\log(1/\eps)\gamma)$ term on the right-hand side is  $\leq \log(1/2\eps)/2$; if $\eps=\Omega(1)$ or $\gamma=\Omega(1)$ then the lower bound was already known, as mentioned in the introduction).

\section{Proof for the case with unique ground state, $N=\log(1/\eps)^2/\gamma^2$}\label{sec:lowerboundunique}

In this section we show the same lower bound $T=\Omega(\log(1/\eps)/\gamma\delta)$ for the special case where the $H$ have a unique ground state, whose eigenvalue is bounded away from all other eigenvalues.
The price to pay is that we have to make the dimension larger by a $\log(1/\eps)$-factor now:

\begin{theorem}
There exists a family of $O(\log(1/\eps)^2/\gamma^2)$-dimensional Hamiltonians~$H$, of operator norm $\norm{H}=O(1)$, that have a unique ground state and a ground-state energy that is at least $\delta$ less than the second eigenvalue, such that every quantum algorithm that finds the ground-state energy of $H$ within $\pm\delta$ with success probability $\geq 1-\eps$, using $T$ controlled  applications of $U=e^{\ri H}$ and $U^{-1}$ and unlimited applications of a unitary $A$ and $A^{-1}$ promised that $A\ket{0}$ has overlap at least $\gamma$ with the ground space of $H$,
needs at least 
\[
T=\Omega(\log(1/\eps)/\gamma\delta).
\]
\end{theorem}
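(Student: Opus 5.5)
The plan is to keep the polynomial-method skeleton of Section~\ref{sec:lowerboundnonunique}: Lemma~\ref{lem:Tfracto} for fractional queries, symmetrization, and Lemmas~\ref{lem:CRChebyshev} and~\ref{lem:epsD}. The change is in the hard family. Its top eigenspace must be one-dimensional with a gap of order $\delta$, while the bit string $x$ should still enter $U$ only through a $\Theta(\delta)$-fractional phase query, so that every use of $U^{\pm1}$ adds only $O(\delta)$ to the effective degree.

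\textbf{Hard family.} Work in $\mathrm{span}\{\ket{0},\ket{1},\ldots,\ket{N}\}$ with $N=\log(1/\eps)^2/\gamma^2$. For $x\in\01^N$ let $U_x=V\,O_x^{\phi}\,V^{-1}$, where $\phi=\Theta(\delta)$ and $V$ is a fixed, input-independent unitary. $V$ mixes $\ket{0}$ with each $\ket{j}$ by a small fixed angle. The intent is that for each $x$ in the hard set, $U_x$ has a unique top eigenvector of the form $\sqrt{a}\ket{0}+\omega\sqrt{1-a}\ket{j}$ with $a\geq\gamma^2$, matching the description in the introduction. The gap to the rest of the spectrum should be at least $3\delta$, and $U_x=I$ when $x=0^N$.

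The guiding state is chosen with a substantial $\ket{0}$-component, which is what makes its overlap with this ground state at least~$\gamma$. Since $V$ is fixed, each call to $U_x^{\pm1}$ costs one call to $O_x^{\pm\phi}$ plus free unitaries. Hence Lemma~\ref{lem:Tfracto} applies unchanged and gives a nonnegative approximating polynomial of degree $D=O(T\delta+\log(1/\eps))$.

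\textbf{Reduction and degree bound.} Distinguishing $x=0^N$ from the hard inputs then forces the symmetrized polynomial to separate weight $0$ from the allowed positive weights with error $2\eps$. I would restrict the hard inputs to weights in a range $\{k_0,\ldots,N\}$ on which the ground state stays unique. The first thing to try is to take $k_0$ of order $\gamma^2N=\log(1/\eps)^2$. Then Lemma~\ref{lem:epsD}, with $\gamma^2$ replaced by $k_0/N$, gives
\[
\log(1/\eps)\leq O\bigl(D^2/N+D\sqrt{k_0/N}\bigr).
\]
With $N=\log(1/\eps)^2/\gamma^2$, the additive $\log(1/\eps)$ inside $D$ contributes only $O(\gamma^2)$ and $O(\gamma\log(1/\eps))$ terms on the right-hand side. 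These are absorbed once $\gamma$ is below a constant, which is exactly where the extra factor $\log(1/\eps)$ in the dimension should be spent. The $T\delta$ part of $D$ then yields $T=\Omega(\log(1/\eps)/\gamma\delta)$, as in Section~\ref{sec:lowerboundnonunique}.

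\textbf{Main obstacle.} I expect the hardest step to be designing $V$ and the guiding state so that all three requirements hold simultaneously: uniqueness with a $3\delta$ gap, overlap at least $\gamma$ uniformly over the hard set, and each query remaining a single $\Theta(\delta)$-fractional query. With a weight-$k$ input, the naive diagonal $U$ has a $k$-dimensional top space. Coupling it through $\ket{0}$ tends to split the degeneracy only by a weight-dependent amount. The fallback is to use only weight-one inputs $x=e_j$, where uniqueness is automatic. In that case I would need a degree lower bound for separating $0$ from $1$ on $\{0,\ldots,N\}$ after symmetrization, where only these two points are constrained. I would handle this by averaging the polynomial over a random relabeling of a larger block structure, so that the block weight again ranges over many integers and Coppersmith--Rivlin applies. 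I expect this averaging step to be the delicate part.
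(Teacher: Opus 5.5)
\textbf{Gap: the hard family cannot satisfy the promises, so the argument fails at its first step.} Since $V$ is input-independent, $U_x=VO_x^{\phi}V^{-1}$ has exactly the spectrum of $O_x^{\phi}$, because conjugating by a fixed unitary splits no degeneracy. For $|x|\geq 2$ the top eigenspace $V\,\mathrm{span}\{\ket{j}:x_j=1\}$ is therefore $|x|$-dimensional, which rules out hard inputs with weights in $\{k_0,\ldots,N\}$.

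The weight-one fallback also fails. Its top eigenvectors $V\ket{j}$, $j\in[N]$, are orthonormal, so for any guiding state $\ket{g}$ you get $\sum_j|\bra{g}V\ket{j}|^2\leq 1$. The overlap promise can then hold for at most $1/\gamma^2$ indices. Equivalently, the vectors $\sqrt{a}\ket{0}+\omega\sqrt{1-a}\ket{j}$ with $a\geq\gamma^2$ that you want as ground states are pairwise non-orthogonal, so no unitary $V$ maps the $\ket{j}$ onto them.

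Worse, the family is provably too easy. Since $A$ may be used without limit, an algorithm can learn $\ket{g}$, undo $V$, and run small-error Grover search over those $\leq 1/\gamma^2$ indices. It can simulate each phase query exactly with $O(1/\delta)$ uses of $U^{\pm 1}$, for a total cost of $O(\sqrt{\log(1/\eps)}/\gamma\delta)$. So no degree argument on your family, with block-averaging or otherwise, can give $\Omega(\log(1/\eps)/\gamma\delta)$. Requiring every use of $U$ to be one fractional query to a bit string in a fixed basis is incompatible with a unique ground state plus uniform overlap.

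Relatedly, the extra $\log(1/\eps)$ factor in the dimension is not what absorbs the $\log(1/\eps)$ inside $D$. That absorption already works with $N=\log(1/\eps)/\gamma^2$, as in Section~\ref{sec:lowerboundnonunique}.

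\textbf{The missing idea} is to abandon bit strings and make the overlap itself the hidden continuous parameter. The paper takes $U=I+(e^{\ri\theta}-1)\ketbra{\psi_{a\omega j}}{\psi_{a\omega j}}$ with $\ket{\psi_{a\omega j}}=\sqrt{a}\ket{0}+\omega\sqrt{1-a}\ket{j}$ and guiding state $\ket{0}$. It writes $U=\frac{1+e^{\ri\theta}}{2}I+\frac{1-e^{\ri\theta}}{2}R_{a\omega j}$, with $R_{a\omega j}$ the reflection about $\ket{\psi_{a\omega j}}$, and truncates to $O(T\delta+\log(1/\eps))$ reflections as in Lemma~\ref{lem:Tfracto}. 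The amplitudes become polynomials in $a$, $1-a$ and $\omega\sqrt{a(1-a)}$. Averaging over $\omega$ kills the odd powers of $\sqrt{a(1-a)}$, and averaging over $j$ leaves a univariate polynomial $Q(a)$ of degree $O(T\delta+\log(1/\eps))$.

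Because correctness says nothing at $(a,\theta)=(0,3\delta)$, the paper splits on $p(0,3\delta)$:
\begin{itemize}
\item If it is below $1/2$, the algorithm distinguishes $0^N$ from $e_j$ with $3\delta/\pi$-fractional queries, which costs $\Omega(\sqrt{N}/\delta)$.
\item Otherwise $Q(a)/Q(0)$ equals $1$ at $a=0$ and is at most $8\eps$ on $[\gamma^2,1]$. Lemma~\ref{lem:CRChebyshev} then applies directly, with no symmetrization or Coppersmith--Rivlin step.
\end{itemize}
The dimension $N=\log(1/\eps)^2/\gamma^2$ is chosen precisely to balance these two cases.
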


Again we lower bound the equivalent problem of approximating $\lambda_{\max}$ of~$H$ rather than $\lambda_{\min}$.
Consider Hilbert space spanned by $\ket{0},\ket{1},\ldots,\ket{N}$. The unitaries $U$ in the hard family will now be 
\begin{equation}\label{eq:decompUunique}
U=I+(e^{\ri \theta}-1)\ketbra{\psi_{a\omega j}}{\psi_{a\omega j}},
\end{equation}
where 
\[
\ket{\psi_{a\omega j}}=\sqrt{a}\ket{0}+\omega\sqrt{1-a}\ket{j}, 
\]
for some $a\in[0,1]$, $j\in[N]$, $\omega\in\{-1,1\}$.
For $\theta>0$, $U$ has one eigenstate $\ket{\psi_{a\omega j}}$ with eigenvalue $e^{\ri \theta}$, and on the rest $U$ acts like identity.
This~$U$ is no longer diagonal in the computational basis, though it has at most 2 nonzero off-diagonal entries. 
For $\theta=0$, we have $U=I$, which of course does not have a unique ground state, but our proof treats it as a degenerate baseline to show the hardness of approximating a unique ground-state value $\theta>0$.

Let $A\ket{0}=\ket{0}$.
This guiding state is useless, but as long as $a\geq \gamma^2$ it does have the required overlap $\geq \gamma$ with the unique top eigenvector $\ket{\psi_{a\omega j}}$.

Now fix an algorithm with $T$ applications of $U$ and $U^{-1}$, with output 1 corresponding to an estimate close to~0 as in the previous section. Let $p(a,\theta)$ be the acceptance probability of the algorithm, averaged uniformly over all $j\in[N]$ and $\omega\in\{-1,1\}$.
From the correctness of the algorithm up to error probability $\eps$, 
we know that $p(a,0)\in[1-\eps,1]$ for all $a\in[0,1]$ (incl.\ for $a=0$), and $p(a,3\delta)\in[0,\eps]$ for all $a\in[\gamma^2,1]$.
The correctness of the algorithm does not tell us whether $p(0,3\delta)$ is large or small, so we will make the following case distinction.

\medskip

{\bf Case 1: $p(0,3\delta)\in[0,1/2)$.}
Note that for $a=0$, we have
$U=I+(e^{\ri\theta}-1)\ketbra{j}{j}$. If $\theta=0$, then $U=I$, which corresponds (rather trivially) to making $3\delta/\pi$-fractional phase queries to the string $x=0^N$. If $\theta=3\delta$, then $U$ corresponds to making $3\delta/\pi$-fractional phase queries to the string $x=e_j$, the $N$-bit string that has a 1 only at the $j$th position. The fact that $p(0,3\delta)< 1/2$ while $p(0,0)\geq 1-\eps$ means that the algorithm distinguishes these two cases with small constant  error probability, for which a lower bound of $T=\Omega(\frac{1}{\delta}\sqrt{N})$ fractional queries is well-known (see, e.g., \cite[Appendix~B]{lmrss:stateconv} combined with the $\Omega(\sqrt{N})$ bound for search, or \cite[Lemma 3.5]{Mande2026tightboundsquantum}).

\medskip

{\bf Case 2: $p(0,3\delta)\in[1/2,1]$.}
For this case we will prove $T=\Omega(\log(1/\eps)/\gamma\delta)$, which is substantially more involved than Case~1.

The main thing is the following modified translation of the acceptance probability of a $T$-query algorithm to a polynomial, which simultaneously allows us to focus on the case of constant~$\theta$. This is the analogue of Lemma~\ref{lem:Tfracto} from the previous section.

\begin{lemma}
Consider a quantum algorithm that makes $T$ applications of $U$ or $U^{-1}$, for some fixed $\theta$ and variable $a,j,\omega$. 
Let $P(a)$ be the probability, averaged uniformly over $j$ and $\omega$, that it outputs 1.
For every $\eps\in(0,1)$, there is a real univariate polynomial $Q(a)$ of degree $D=O(T\theta+\log(1/\eps))$ such that $Q(a)\geq 0$ and $|P(a)-Q(a)|\leq\eps$ for all $a\in[0,1]$.
\end{lemma}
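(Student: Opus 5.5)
The approach follows Lemma~\ref{lem:Tfracto}, with the rank-one projector $\Pi_{a\omega j}=\ketbra{\psi_{a\omega j}}{\psi_{a\omega j}}$ playing the role of $O_x$. By Eq.~\eqref{eq:decompUunique},
\[
U=I+(e^{\ri\theta}-1)\Pi_{a\omega j},\qquad U^{-1}=I+(e^{-\ri\theta}-1)\Pi_{a\omega j}.
\]
Write the algorithm as ${\cal A}=W_TU^{\pm1}W_{T-1}\cdots U^{\pm1}W_0$, where we may assume any control register has already been absorbed (one can also use the controlled version $\ketbra00\otimes I+\ketbra11\otimes U^{\pm1}$, which again equals $I$ plus $(e^{\pm\ri\theta}-1)$ times a projector). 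Expanding the product gives $2^T$ terms. Define ${\cal A}_K$ by keeping only the terms with at most $K=cT\theta+\log(3/\eps)$ factors of $\Pi$. The $I$-coefficient has modulus $1$ and $|e^{\pm\ri\theta}-1|\le\theta$, so the tail bound is
\[
\norm{{\cal A}-{\cal A}_K}\le\sum_{k>K}\binom{T}{k}\theta^k\le\sum_{k>K}(eT\theta/k)^k\le 2^{-K}\le\eps/3 .
\]
This is the same calculation as before, except that there is no shrinking factor on the identity part, and none is needed since its modulus is exactly $1$. The same argument as in Lemma~\ref{lem:Tfracto} then shows that $\norm{\Pi_{\rm acc}{\cal A}_K\ket0}^2$ is within $\eps$ of the acceptance probability for every fixed $a,j,\omega$.

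The key step is showing that, after averaging over $j$ and $\omega$, this quantity is a polynomial in $a$ of degree $O(K)$. Each term of ${\cal A}_K\ket0$ is a product of at most $K$ projectors $\Pi_{a\omega j}=a\ketbra00+\omega\sqrt{a(1-a)}(\ketbra0j+\ketbra j0)+(1-a)\ketbra jj$, interleaved with fixed unitaries. So an amplitude of ${\cal A}_K\ket0$ is a sum of monomials of the form $a^{\alpha}(1-a)^{\beta}(\omega\sqrt{a(1-a)})^{m}$ with $\alpha+\beta+m\le K$. The squared norm is a sum of products of pairs of such monomials, i.e.\ terms $a^{\alpha}(1-a)^{\beta}\omega^{m}\sqrt{a(1-a)}^{\,m}$ with total exponent at most $2K$. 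Averaging over $\omega\in\{\pm1\}$ kills every term with odd $m$. For even $m$, $\sqrt{a(1-a)}^{\,m}=(a(1-a))^{m/2}$ is a genuine polynomial of degree $m$. The coefficients may depend on $j$, but averaging over $j$ just takes a convex combination. Hence $Q(a)=\mathbb{E}_{j,\omega}\norm{\Pi_{\rm acc}{\cal A}_K\ket0}^2$ is a real polynomial of degree $\le 2K=O(T\theta+\log(1/\eps))$. It is nonnegative as an average of squared norms, and it satisfies $|P(a)-Q(a)|\le\eps$ on $[0,1]$ because the per-$(j,\omega)$ error bound survives averaging.

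The main obstacle is making this parity argument rigorous. The cleanest approach is to note that the map $\omega\mapsto-\omega$ is implemented by the diagonal unitary $Z_j$ (phase $-1$ on $\ket j$), and that $\sqrt{a}$ and $\sqrt{1-a}$ only ever appear as entries of $\ket{\psi}$. Each projector then contributes the factor $\ket\psi\bra\psi$, in which $\sqrt a$ appears in the total product once from the ket and once from the bra. Hence in $\norm{\cdot}^2$, which is a sum of products of amplitudes and their conjugates, every monomial in $\sqrt a,\sqrt{1-a},\omega$ has even total degree in each of $\sqrt a\cdot\omega$-type cross terms. I would first verify this by tracking degrees: assign $\sqrt a$ weight $(1,0)$, $\omega\sqrt{1-a}$ weight $(0,1)$, note that $\ket\psi\bra\psi$ is homogeneous of bidegree $2$, and check that only monomials $\sqrt a^{\,r}(\omega\sqrt{1-a})^{s}$ with $r+s$ even appear. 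Then the odd-$r$ (equivalently odd-$s$) terms are exactly those removed by the $\omega$-average, possibly with an extra symmetrization over $a$-independent phases if needed.
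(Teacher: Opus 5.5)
Your proof is correct and is essentially the paper's argument. You truncate the expansion of the $T$ queries at $K=cT\theta+\log(3/\eps)$ non-identity factors, express amplitudes through the real quantities $a$, $1-a$, $\omega\sqrt{a(1-a)}$, and let the average over $\omega$ kill odd powers. The only cosmetic difference is that you expand $U=I+(e^{\ri\theta}-1)\ketbra{\psi_{a\omega j}}{\psi_{a\omega j}}$ in the projector, whereas the paper uses $U=\frac{1+e^{\ri\theta}}{2}I+\frac{1-e^{\ri\theta}}{2}R_{a\omega j}$ in the reflection; both give the same tail bound and handle controlled queries equally directly. The parity argument in your second paragraph is already rigorous, because those three quantities are real and the coefficients depend only on $j$, so the bidegree bookkeeping in your final paragraph is a valid but unnecessary second justification.
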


\begin{proof}
Let $R_{a\omega j}=I-2\ketbra{\psi_{a\omega j}}{\psi_{a\omega j}}$ be the reflection about the state $\ket{\psi_{a\omega j}}$.
Note that $U$ can be decomposed as
\begin{equation}\label{eq:decompUtheta}
U=\frac{1+e^{\ri \theta}}{2}I+\frac{1-e^{\ri \theta}}{2}R_{a\omega j},
\end{equation}
and a similar decomposition of $U^{-1}$ with a minus in the exponent.
Writing $U^{\pm}$ as placeholders for something that could be $U$ or $U^{-1}$, the $T$-query algorithm corresponds to a unitary 
\[
{\cal A}=W_T U^{\pm} W_{T-1}\cdots U^{\pm} W_1 U^{\pm} W_0,
\]
for some fixed, input-independent unitaries $W_T,\ldots,W_0$, applied to fixed initial state $\ket{0}$ and followed by a measurement of the first qubit to produce the binary output.
Use the decomposition of $U$ and $U^{-1}$, decomposing $\cal A$ as a sum of $2^T$ terms, and let ${\cal A}_K$ be the truncated sum obtained by omitting all terms that have more than $K=cT\theta+\log(3/\eps)$ $R_{a\omega j}$'s in them, for some sufficiently large constant $c$ to be determined later.
Using triangle inequality and the fact that the $W_j$'s are unitary, and $\binom{T}{k}\leq (eT/k)^k$, we show that ${\cal A}$ and ${\cal A}_K$ are close in operator norm:
\[
\norm{{\cal A}-{\cal A}_K}\leq \sum_{k=K+1}^T\binom{T}{k}\left|\frac{1+e^{\ri \theta}}{2}\right|^{T-k}\cdot\left|\frac{1-e^{\ri \theta}}{2}\right|^k
\leq \sum_{k=K+1}^T O(T\theta/k)^k\leq O( T\theta/K)^K\leq 2^{-K}\leq\eps/3,
\]
where the second and third inequality follow by choosing the constant $c$ in the definition of $K$ large enough so that the $O( T\theta/K)$ term is $\leq 1/2$.

By a variant of the usual polynomial-method inductive argument~\cite{BBCMW01}, the final entries of the vector ${\cal A}_K\ket{0}$ can be written as
\[
\sum_{d=0}^{K}\sum_{d_1,d_2,d_3\geq 0:d_1+d_2+d_3=d} b_{d_1,d_2,d_3}a^{d_1}(1-a)^{d_2}(\omega\sqrt{a(1-a)})^{d_3},
\]
for some coefficients $b_{d_1,d_2,d_3}\in\mathbb{C}$. The acceptance probability is the sum of squared moduli of such amplitudes, which can then be written as
\[
\sum_{d=0}^{2K}\sum_{d_1,d_2,d_3\geq 0:d_1+d_2+d_3=d} c_{d_1,d_2,d_3}a^{d_1}(1-a)^{d_2}(\omega\sqrt{a(1-a)})^{d_3},
\]
for some $c_{d_1,d_2,d_3}\in\mathbb{R}$. For the terms where $d_3$ is even, $(\omega\sqrt{a(1-a)})^{d_3}=(a(1-a))^{d_3/2}$ is a polynomial in $a$ of degree $d_3$, hence that whole term is a polynomial in $a$ of degree $d_1+d_2+d_3=d$. The terms where $d_3$ is odd will vanish when we average the acceptance probability over $\omega\in\{-1,1\}$, since $\sum_{\omega\in\{-1,1\}}(\omega\sqrt{a(1-a)})^{d_3}=\sum_{\omega\in\{-1,1\}}\omega(\sqrt{a(1-a)})^{d_3}=0$. This gives a polynomial $Q_j(a)$ that depends on $j$. Averaging over $j$ gives the polynomial $Q(a)$ of degree $\leq 2K=O(T\theta+\log(1/\eps))$. It $\eps$-approximates $P(a)$ in the same way as at the end of the proof of Lemma~\ref{lem:Tfracto}.
\end{proof}

To obtain our lower bound for Case~2, we will lower bound the degree $D=O(T\theta+\log(1/\eps))$ of the polynomial $Q$ that comes out of the previous lemma. This $Q$ $\eps$-approximates $P$. Because $P(0)\in[1/2,1]$ and $P(a)\leq \eps$ for all $a\in[\gamma^2,1]$, the degree-$D$ polynomial $q(a)=Q(a)/Q(0)$ satisfies $q(0)=1$ and $|q(a)|\leq (P(a)+\eps)/(P(0)-\eps)\leq 8\eps$ for all $a\in[\gamma^2,1]$ (assuming $\eps\leq 1/4$). By Lemma~\ref{lem:CRChebyshev}, the approximation error of $q$ is $\geq 2^{-O(D\gamma)}$. Hence 
\[
\log(1/8\eps)\leq O(D\gamma)\leq O((T\delta + \log(1/\eps))\gamma)
\]
which implies $T=\Omega(\log(1/\eps)/\gamma\delta)$ if $\eps$ and $\gamma$ are below a sufficiently small constant.

\bigskip

{\bf Combining the two cases.}
Case~1 gives $T=\Omega(\frac{1}{\delta}\sqrt{N})$ and Case~2 gives $T=\Omega(\log(1/\eps)/\gamma\delta)$, so the overall lower bound is the minimum of those two disjoint cases. Choosing $N=\log(1/\eps)^2/\gamma^2$ makes these two lower bounds equal, giving the claimed overall bound
\[
T=\Omega(\log(1/\eps)/\gamma\delta).
\]
Note that the $\eps$-dependence appears only through Case~2, and the dimension only appears through Case~1. Choosing smaller dimension $N$ results in a worse lower bound because Case~1 then determines the minimum, and larger $N$ does not result in a better bound because Case~2 then determines the minimum.

\section{Discussion and future work}

We provided a lower bound to the quantum query complexity of ground-state energy estimation of a Hamiltonian $H$
under the assumption that we can generate a guiding state
that has overlap at least $\gamma >0$ with the ground state or ground space. 
Our lower bound $\Omega(\log(1/\epsilon)/\gamma \delta)$ includes all relevant asymptotic
parameters, namely $\gamma$, the estimation error or spectral gap $\delta$, and the error probability $\epsilon$.  It assumes controlled access to unitaries like $U=e^{\pm \ri H}$ or a block-encoding of $H$. 
It matches the very recent upper bound of Jeffery and Witteveen~\cite{JW:optQPE} for guided ground-state energy estimation up to constant factors, and hence is optimal in all three parameters. 
The lower bound also extends to the task of preparing the ground \emph{state} of~$H$ up to  trace distance~$\eps$.

We note that
another relevant access model in quantum simulation is the sparse-matrix access model, where the Hamiltonian~$H$ is $d$-sparse (meaning each row and column has at most $d$ nonzero entries) and one can ask queries of the form ``what is the $k$th nonzero entry in the $j$th row of $H$''. 
Despite the fact that the $H$ in our hard families are 1- or 2-sparse, our lower bound does not apply to that access model and does not include the overhead due to $d$;
 proving such a lower bound is left as an open problem.

Somewhat surprisingly, and maybe disappointingly, the hard families for both of our lower bounds involve guiding states that are effectively independent of the ground state (and of its energy), despite having the required $\geq\gamma$ overlap with the ground state or ground space. This suggests that the standard requirement in the guided Hamiltonian problem of large overlap may not be the best way to set up the problem:
somehow the guiding state should give genuinely useful information about the ground state, not just overlap with uninformative parts of it (as in the hard family of Section~\ref{sec:lowerboundunique}, where the overlap is always in the uninformative state $\ket{0}$).
We leave the question what could be a better alternative to a simple overlap requirement for future work.

\bigskip
\bigskip

\noindent{\bf Acknowledgment}: 
We thank Robin Kothari and Guang Hao Low for helpful comments, and Stacey Jeffery and Freek Witteveen for sending us a version of~\cite{JW:optQPE}, for helpful comments, and for coordinating arXiv submission.

\medskip

\noindent{\bf AI statement}: ChatGPT 5.6 Pro generated both hard families and very convoluted and long (20+ pages) but largely correct proofs of the lower bounds, building on top of existing lower-bound techniques from~\cite{BCWZ99,Mande2026tightboundsquantum}, dual polynomials, and trigonometric polynomials. 
It was first prompted to prove the conjectured lower bound by combining two specific lemmas from~\cite{Mande2026tightboundsquantum} and using tools from trigonometric polynomials and Chebyshev polynomials.
The main new idea that ChatGPT came up with is the observation that one can write the acceptance probability as a hybrid trigonometric-real polynomial in $\theta$ and $x$ with a zero of large multiplicity at $\theta=0$; this zero can then be used together with Taylor's theorem and Bernstein's inequality to upper bound the acceptance probability at $\theta=0$.
However, we eventually managed to whittle down the proofs to something much simpler that doesn't need this new idea anymore.
The paper writing is fully human.
ChatGPT was also used to critique intermediate drafts, revealing some small errors in the writing that we corrected.

\bibliographystyle{alpha}

\bibliography{bibo}

\end{document}